\documentclass{IEEEtran}
\usepackage{amsmath}
\usepackage{amssymb}
\usepackage{booktabs}
\usepackage{float}
\usepackage{cite}
\usepackage{amsthm}
\usepackage{enumitem}
\usepackage{algorithm}
\usepackage{algpseudocode}
\usepackage{todonotes}
\usepackage[colorlinks=true, urlcolor=blue, linkcolor=black, citecolor=black]{hyperref}
\newtheorem{lemma}{Lemma}
\newtheorem{proposition}{Proposition}
\newtheorem{definition}{Definition}
\usepackage{pgfplots}
\pgfplotsset{compat=1.18}
\usepgfplotslibrary{groupplots}
\usepackage{tikz}

\definecolor{customCyan}{RGB}{0, 150, 255}
\definecolor{customOrange}{RGB}{230, 100, 50}

\DeclareMathOperator*{\minimize}{minimize}
\DeclareMathOperator*{\maximize}{maximize}

\DeclareMathOperator*{\subjectto}{subject\ to}
\DeclareMathOperator{\diag}{diag}

\newcommand{\Wdiag}{\Lambda}
\newcommand{\Epotential}{F}
\newcommand{\Esnapshot}{{\mathcal{E}}}
\newcommand{\Eviable}{{\mathcal{E}}'}

\newcommand{\am}[1]{{\color{blue}#1}}

\author{Arman Mollakhani, Jerayu Tiamraj, Shu-Jie Cao, and Dongning Guo\thanks{The authors are with the Department of Electrical and Computer Engineering, Northwestern University, Evanston, IL 60208.}\thanks{Emails:
\{arman.mollakhani, shujie.cao, dguo\}@northwestern.edu,
tomtiamraj2025@u.northwestern.edu}\thanks{This work was supported in part by NSF grants Nos.~2132700 (SpectrumX) and 2434044.}
}

\title{Inter-Satellite Link Optimization for Low-Latency Global Networking}

\date{\today}

\begin{document}

\maketitle

\begin{abstract}
Large-scale low-Earth-orbit satellite constellations offer a promising platform for global low-latency networking, aided by faster propagation in free space than in fiber and copper. In such systems, end-to-end latency is largely determined by the inter-satellite link (ISL) topology. In particular, the network diameter---the maximum shortest path between any pair of satellites---serves as a key performance metric for time-sensitive applications. Designing diameter-optimal topologies is challenging due to degree constraints, line-of-sight limitations, and orbital dynamics.
This paper proposes a two-stage optimization framework for ISL topology design. First, a continuous relaxation of the link selection problem is formulated as a convex program that maximizes the algebraic connectivity of the  
Laplacian, serving as a tractable surrogate for diameter minimization. Second, the resulting fractional solution is mapped to a feasible discrete topology using integer linear programming. An iterative local-search heuristic is also developed as a baseline.
Extensive simulations on Walker–Delta constellations show that the proposed method consistently achieves smaller network diameters and improved robustness compared to conventional heuristics, while allowing trade-offs between latency and link persistence. The approach offers a principled framework for designing high-performance satellite mesh networks. For a constellation of 1,500 satellites, each equipped with four ISLs of up to 2,500~km, the network diameter can be reduced to as low as 12, yielding end-to-end delays under 90~ms between any two points on Earth.
\end{abstract}

\section{Introduction}
\label{sec:introduction}

Low-Earth-orbit (LEO) constellations such as Starlink, AST SpaceMobile, OneWeb, and Amazon Leo 
aim to provide 
near-ubiquitous global coverage~\cite{berry2024spectrum}. Optical inter-satellite links (ISLs) 
enable efficient in-network data transport within 
and 
across constellations.
In free space, electromagnetic waves propagate at approximately 
300~km per millisecond (ms), 
traversing the Earth's pole-to-pole 
geodesic distance in 
as little as 67~ms---roughly 47\% faster than in optical fiber~\cite{handley2018delay}. Consequently, a well-designed LEO ISL network can outperform the fastest terrestrial fiber routes in end-to-end latency, making such
systems 
well-suited for latency-critical applications including 
remote control, telemedicine, high-frequency trading, blockchain synchronization, and emergency communications~\cite{bozkurt2017why, mollakhani2025fault}.

Constellations operated by a single service provider can further exploit globally optimized routing in stead of distributed protocols such as the Border Gateway Protocol (BGP): forwarding tables can be precomputed for short \emph{epochs}, disseminated to all nodes, and updated as the topology evolves~\cite{handley2018delay}. This deterministic approach eliminates in-orbit route discovery and enables techniques such as cut-through switching~\cite{kermani1979virtual} or multiprotocol label switching forwarding
~\cite{rfc3031},
reducing per-hop processing overhead to the microsecond scale~\cite{heine2023end}. 

In this work, we use the network diameter---the maximum shortest-path cost between any two satellites---as a structural proxy for latency.
While the framework accommodates arbitrary positive link costs, we adopt hop count as the path metric in our simulations. This choice is justified by the geometric structure of LEO constellations and the bounded length of ISLs. Consequently, paths with fewer hops tend to be more direct, reducing propagation distance.

In practice, 
each satellite is limited to a small number of steerable laser terminals, and any candidate link must satisfy distance and line-of-sight constraints not just at a single instant but ideally over an entire orbital period. The design problem is therefore to select a sparse, physically feasible set of  
ISLs that achieves a small diameter while maintaining orbital stability.
Most existing approaches rely on fixed geometric patterns 
or on iterative local-search heuristics.  
These methods often fail to capture the global structural properties required for near-optimal connectivity. Moreover, local-search methods are prone to poor local minima. 
This paper addresses these limitations through a spectral graph-theoretic framework.
The main contributions of this paper are as follows:
\begin{enumerate}[label=(\roman*)]
    \item We formulate a constrained network diameter minimization problem 
    that accounts for hardware limits, geometric constraints, and long-term link stability.

    \item We propose a two-stage spectral optimization framework that combines convex relaxation and integer linear programming to design 
    degree-constrained ISL topologies.

    \item We develop an efficient first-order projected gradient ascent method 
    that scales to
    large constellations.

    \item We introduce an iterative local-search heuristic 
    that serves both as a practical standalone method and as a warm start for the spectral framework.

\end{enumerate}

The remainder of this paper is organized as follows. Sec.~\ref{sec:related_work} reviews related work.
Sec.~\ref{sec:system_model} describes the constellation model.
Sec.~\ref{sec:problem_formulation} formulates  
the diameter-minimization problem. 
Sec.~\ref{sec:framework} introduces the spectral framework.
Sec.~\ref{sec:baseline} presents the heuristic algorithm. Sec.~\ref{sec:simulation} provides simulation results, and Sec.~\ref{sec:conclusion} concludes the paper.

\section{Related Work}
\label{sec:related_work}

Prior work on ISL topology design has explored time-division methods for periodically activating feasible links for long-term connectivity~\cite{chu2018time}, link scheduling for navigation constellations~\cite{sun2022inter}, and multi-objective assignment algorithms that account for payload and visibility constraints. Large-scale ISL scheduling has also been modeled as a discrete network multi-commodity flow problem, with data-driven search heuristics proposed to minimize transmission delay~\cite{liu2022data}. While these approaches emphasize temporal reachability, they do not explicitly optimize worst-case latency or network diameter.

Multicast and broadcast strategies in LEO constellations have also been explored, including core-based multicast trees and disruption-tolerant mechanisms for multilayered satellite networks~\cite{cheng2007core, zheng2010routing}. Surveys of routing in LEO networks highlight trade-offs between centralized and distributed protocols and the need to adapt to predictable topological changes~\cite{xiaogang2016survey}. Centralized, software-defined networking approaches precompute routing tables for anticipated topologies, reducing on-board computation and routing convergence time~\cite{corici2024sdn}. While improving resilience and group communication, these methods often rely on idealized routing assumptions and ignore geometric constraints inherent to LEO dynamics.

Stochastic geometry has been used to model spatial connectivity and evaluate average-case latency~\cite{wang2022stochastic}, and congestion-aware routing protocols improve throughput under dynamic traffic~\cite{dai2021distributed}. Analytic models estimate ISL hop counts between ground users~\cite{chen2021analysis}, and algorithms have been proposed to compute shortest-distance paths directly from satellite phase~\cite{chen2024shortest}. In~\cite{mollakhani2026inter}, we have proposed an iterative local-search heuristics to refine inter-plane link assignments in Starlink-inspired constellations, but the method is prone to local optima as constellation size grows~\cite{GareyJohnson1979}.

Spectral graph theory provides a framework to relate local link selection to global network properties. The algebraic connectivity of the Laplacian captures network expansion and diameter~\cite{chung1997spectral, chung1989diameters}, and maximizing this spectral gap via convex optimization can eliminate bottlenecks~\cite{ghosh2006growing}. Continuous relaxations of discrete link selection problems, supported by semidefinite programming (SDP) and convex analysis~\cite{fan1949theorem,  
lewis1996convex, 
vandenberghe1996semidefinite 
}, 
enable tractable optimization of low-diameter networks. We leverage these spectral techniques to design ISL topologies under practical constraints.

\section{System Model}
\label{sec:system_model}

In this work, we model the continuous evolution of an LEO satellite network using an epoch-based framework, where the topology is assumed to remain static within each epoch, and is only updated at discrete time intervals. In other words, each epoch represents a fixed configuration of feasible ISLs, during which routing and link assignment decisions can be made deterministically. This formulation allows for tractable analysis and static optimization of network properties within each epoch.

\subsection{Constellation Geometry}

We consider a Walker--Delta constellation with $N_p$ orbital planes and $N_s$ satellites per plane, for a total of $N = N_p \times N_s$ satellites. Each satellite orbits at altitude 
$h$~km 
with 
inclination 
$\theta$, 
giving orbital radius  
$r = R_E + h$, where $R_E = 6{,}371$~km is the mean radius of Earth.

Orbital planes are evenly spaced in longitude, and satellites within each plane are uniformly distributed. To capture practical position variations, a random 
phase offset $\phi_i \in [0, \phi_{\max}]$ is applied to each plane $i$. This preserves the overall 
structure while modeling 
deviations due to 
orbit insertion errors and gravitational perturbations that 
cause  
drift from the 
nominal geometry over time~\cite{hu2025station}.

\subsection{Link Feasibility Constraints}

With the origin of a three-dimensional Cartesian space 
placed at Earth's center, we denote the position of satellite $u
$ 
by $\mathbf{x}_u \in \mathbb{R}^3$. In the time-varying case, the position at time $t$ is denoted $\mathbf{x}_u(t)$.

The maximum feasible distance for an ISL is denoted by $d_{\max}$, determined by physical-layer constraints including laser power, beam divergence, and receiver sensitivity. A candidate link between two satellites $u$ and $v$ is considered feasible at time $t$ if the following conditions are satisfied:
\begin{enumerate}
    \item 
    {Distance constraint:} The Euclidean distance must not exceed the maximum link distance:
    \begin{align}
        \| \mathbf{x}_u(t) - \mathbf{x}_v(t) \| \leq d_{\max}. \label{eq:distance_constraint}
    \end{align}

    \item 
    {Line-of-sight constraint:} The link must not be obstructed by the Earth. Mathematically, this requires
    \begin{align} 
        \frac{\| \mathbf{x}_u(t) \times \mathbf{x}_v(t) \|}
        {\| \mathbf{x}_u(t) - \mathbf{x}_v(t) \|} > R_E, \label{eq:los_constraint}
    \end{align}
    where the left-hand side of~\eqref{eq:los_constraint} computes the perpendicular distance from Earth's center to the line segment connecting $u$ and $v$ (since the vector cross product $\mathbf{x}_u(t) \times \mathbf{x}_v(t)$ gives a vector whose length is equal to the area of the parallelogram formed by those two vectors).
\end{enumerate}

\subsection{Snapshot
vs.\ Viability-Constrained Models} 
 
We require each selected ISL to remain feasible over an epoch, i.e., 
geometrically feasible, free of Earth obstruction, and within 
distance bounds for the entire epoch. 
Accordingly, we adopt an epoch-based model in which 
the 
routing topology remains static within each epoch, allowing route precomputation and reducing in-network overhead.

We consider two different link feasibility models:

\begin{definition}[Snapshot
Model]
A link between two satellites $u$ and $v$ is 
feasible if it satisfies 
constraints~\eqref{eq:distance_constraint} and~\eqref{eq:los_constraint} at a single instant in time $t_0$. The set of all such feasible links is denoted $\Esnapshot$.
\end{definition}

\begin{definition}[Viability-Constrained Model]
A link between two satellites $u$ and $v$ is feasible only if it satisfies 
constraints~\eqref{eq:distance_constraint} and~\eqref{eq:los_constraint} for the entire duration of an orbital period $T$, i.e., for all $t \in [t_0, t_0 + T]$. The resulting feasible set, denoted $\Eviable$, satisfies $\Eviable \subseteq \Esnapshot$.
\end{definition}

This distinction is 
practically important 
because reconfiguring optical ISLs is slow. Retargeting 
requires physically steering laser terminals, re-acquiring beam alignment, 
with setup times 
on the order of several seconds
~\cite{bhattacharjee2023laser}. These delays 
limit 
frequent 
reconfiguration, favoring
topologies that remain valid over longer durations to maintain
stable and low-latency performance.

\section{Problem Formulation}
\label{sec:problem_formulation}

Let $V$ be fixed and denote the set of all $N$ satellites.
Let $\Epotential$ denote the set of all feasible (undirected) ISLs and $E\subset \Epotential$ denote the set of active ISLs. We represent the satellite network as an undirected graph $G = (V, E)$.
Each satellite is assumed to be equipped with $D$ high-speed, bidirectional optical terminals for inter-satellite communication. Consequently, the total degree of each satellite is bounded:
\begin{align}
    \deg_E
    (v) \leq D, \quad \forall v \in V. \label{eq:degree_bound}
\end{align}
 
The general problem is to select 
a subset of edges $E \subset F$
that 
optimizes some performance metric 
subject to 
degree constraints~\eqref{eq:degree_bound}.
Here, we let the metric be the worst-case network diameter. To define this precisely,
each edge $(u,v)\in\Epotential$ is assigned a fixed cost $p_{u,v}\in(0,+\infty)$. For all $(u,v)\notin\Epotential$, it is convenient to assume $p_{u,v}=+\infty$.
We define a path between satellites $u$ and $v$ in graph $(V, E)$ as a sequence of vertices $P = (v_1, v_2, \dots, v_m)$, where $v_1 = u$, $v_m = v$, and $(v_i,v_{i+1})\in E$ for $i=1,\dots,m-1$. .  
The total cost of the path $P$ is:
\begin{align}
    C(P) = \sum_{i=1}^{m-1} p_{v_i, v_{i+1}}.
\end{align} 
Let $\mathcal{P}_{u,v}(E)$ denote the set of all 
paths between $u$ and $v$.
The shortest-path cost between $u$ and $v$, which depends on $E$, is
defined as:
\begin{align}
    d_E(u,v) = \min_{P \in \mathcal{P}_{u,v}(E)} C(P).
\end{align}
If no  
path exists between $u$ and $v$, then $\mathcal{P}_{u,v}(E)$ is empty and $d_E(u,v) = +\infty$.

\begin{definition}[Network Diameter]
    Given the set of edges $E$,
    the network diameter 
    is defined as:
    \begin{align}
    \operatorname{diam}(E)  
    = \max_{u, v \in V} d_E(u, v). \label{eq:diameter}
    \end{align}
\end{definition}

Given a  
universal vertex degree constraint $D$, the discrete optimization problem seeks to select a 
subset of edges $E\subset\Epotential$
that minimizes the network diameter:
\begin{subequations}\label{eq:discrete_problem}
\begin{align}
    \minimize_{E \subset \Epotential} \quad & \operatorname{diam}(E)
    \label{eq:obj_general} \\
    \subjectto \quad & \deg_{E}
    (v) \leq D, \quad \forall v \in V .\label{eq:deg_general}
\end{align}
\end{subequations}

The edge cost 
can be modeled in 
different ways:
\begin{itemize}
    \item Weighted graph model: Each edge in $E$ is assigned a positive weight representing the latency, which may include propagation, transmission, and queuing delays.

    \item 
    Hop-count model: Each edge is assigned one unit of cost,  
    making $d_E(u,v)$ the shortest-path hop distance between $u$ and $v$.
\end{itemize}

The
candidate set $\Epotential$ takes different forms depending on the system's operational requirements (as outlined in Sec.~\ref{sec:system_model}):
\begin{itemize}
    \item \textbf{Snapshot
    Model:} 
    With $\Epotential = \Esnapshot$,~\eqref{eq:discrete_problem}
    minimizes the diameter subject to geometric and line-of-sight constraints~\eqref{eq:distance_constraint} and~\eqref{eq:los_constraint} evaluated at a single instant $t_0$.
    
    \item \textbf{Viability-Constrained Model:} 
    With $\Epotential = \Eviable$,~\eqref{eq:discrete_problem} minimizes the diameter subject to the stricter requirement that selected links satisfy the geometric and line-of-sight constraints for an entire orbital period $T$.
\end{itemize}

Finding a globally optimal solution to
~\eqref{eq:discrete_problem}
is computationally prohibitive for large-scale constellations in general. This follows because even the simpler problem of finding a minimum-diameter degree-constrained spanning subgraph is known to be NP-hard~\cite{GareyJohnson1979}. To see this, note that the decision version of our problem---``does there exist a subgraph satisfying all constraints with diameter at most $D$?''---subsumes degree-constrained subgraph feasibility as a special case.

We pursue two complementary approaches to solve~\eqref{eq:discrete_problem}:
(i) a principled spectral optimization framework that provides continuous relaxation and discrete rounding (Sec.~\ref{sec:framework}),
and
(ii) an effective heuristic algorithm based on iterative local search (Sec.~\ref{sec:baseline}).
Throughout this paper, we set $p_{u,v} = 1$ for all edges, so that path costs reduce to hop counts.

\section{A Topology Design Framework}
\label{sec:framework}

To overcome the computational intractability of the discrete diameter minimization problem~\eqref{eq:discrete_problem}, we propose a two-stage graph-theoretic framework that leverages continuous relaxation and spectral surrogate optimization.

\subsection{Spectral Optimization and Continuous Relaxation}
\label{subsec:spectral_surrogate}

We first introduce a continuous relaxation by associating a strength variable $x_{ij} \in [0, 1]$ with each potential edge $(i,j) \in \Epotential$. Since the graph is undirected, $x_{ij} = x_{ji}$. The degree constraint is relaxed to a continuous budget:
\begin{align}
    \sum_{j : (i,j) \in \Epotential} x_{ij} \leq D, \quad \forall i \in V. \label{eq:continuous_budget}
\end{align}

Next, rather than directly minimizing the worst-case path cost, we employ a spectral surrogate objective to maximize the graph's overall connectivity. We construct the $N \times N$ weighted Laplacian matrix $L(X)$, where the symmetric weight matrix $X$ contains the link strengths $x_{ij}$:
\begin{align}
    L_{ij} =
    \begin{cases}
        \sum_{k \neq i} x_{ik} & \text{if } i = j \\
        -x_{ij} & \text{if } i \neq j .
    \end{cases}
    \label{eq:laplacian}
\end{align}

The connection between minimizing graph diameter and maximizing the second-smallest eigenvalue of the Laplacian, $\lambda_2$ (the algebraic connectivity or spectral gap), is rigorously grounded in spectral graph theory. A graph with a large $\lambda_2$ exhibits strong expansion properties, indicating that no sparse cut exists to separate the graph into two large components. This is formalized by Cheeger's inequality~\cite{chung1997spectral}, which bounds the Cheeger constant $h(G)$:
\begin{align}
    \frac{\lambda_2}{2} \leq h(G) \leq \sqrt{2 \lambda_2}. \label{eq:cheeger_inequality}
\end{align}
For families of graphs with bounded degrees, such expansion properties are known to imply logarithmic upper bounds on the graph diameter~\cite{chung1997spectral, chung1989diameters}. Therefore, maximizing $\lambda_2$ serves as a principled, mathematically tractable proxy for eliminating bottlenecks and minimizing the overall network diameter.

We therefore formulate the diameter minimization problem as the following spectral optimization problem:
\begin{subequations}
\label{eq:spectral_problem} 
\begin{align}
    & \maximize_{\{x_{ij}\}} & & \lambda_2(L(
    X
    )) \\
    & \subjectto & & \sum_{j : (i,j) \in \Epotential} x_{ij} \leq D, \quad \forall i \in V  \\
    & & & 0 \leq x_{ij} \leq 1, \quad \forall (i,j) \in \Epotential  \\
    & & & x_{ij} = x_{ji}, \quad \forall (i,j) \in \Epotential . 
\end{align}
\end{subequations}

A high strength $x_{ij} \to 1$ 
increases the link's contribution to overall graph connectivity. The spectral problem~\eqref{eq:spectral_problem} 
can be solved efficiently (e.g., by being cast as
an SDP).

\begin{proposition}[Convexity]
The spectral optimization problem~\eqref{eq:spectral_problem} is a convex optimization problem.
\end{proposition}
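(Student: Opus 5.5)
The plan is to show two things: the feasible set of \eqref{eq:spectral_problem} is convex, and the objective $X \mapsto \lambda_2(L(X))$ is concave on that set. Maximizing a concave function over a convex set is a convex program. The constraints are immediate. The budget constraints $\sum_{j:(i,j)\in\Epotential} x_{ij}\le D$, the box constraints $0\le x_{ij}\le 1$, and the symmetry constraints $x_{ij}=x_{ji}$ are all linear inequalities or equalities in the finite vector of variables $\{x_{ij}\}_{(i,j)\in\Epotential}$. Their intersection is therefore a polyhedron, which is convex.

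For the objective, I will first note that the map $X\mapsto L(X)$ defined in \eqref{eq:laplacian} is linear. Concretely, $L(X)=\sum_{(i,j)\in\Epotential,\, i<j} x_{ij}\,(e_i-e_j)(e_i-e_j)^\top$, where $e_i$ is the $i$-th standard basis vector. Each $L(X)$ is symmetric positive semidefinite and satisfies $L(X)\mathbf{1}=0$, so $\lambda_1(L(X))=0$ with eigenvector $\mathbf{1}$ for every feasible $X$. By the Courant--Fischer theorem, $\lambda_2$ is then the minimum of the Rayleigh quotient over the orthogonal complement of this fixed eigenvector:
\begin{align*}
\lambda_2(L(X)) = \min_{\substack{v\in\mathbb{R}^N,\ \|v\|=1 \\ v^\top \mathbf{1}=0}} v^\top L(X)\, v .
\end{align*}
For each fixed admissible $v$, the function $X\mapsto v^\top L(X) v=\sum x_{ij}(v_i-v_j)^2$ is linear in $X$. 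A pointwise minimum of linear functions is concave, so $\lambda_2(L(X))$ is concave in $X$. This completes the argument.

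The main point requiring care is justifying the variational formula. In general $\lambda_2$ of an arbitrary symmetric matrix is not concave, and only the sum of the smallest $k$ eigenvalues is (Ky Fan~\cite{fan1949theorem}). The step works here because the eigenvector $\mathbf{1}$ of $\lambda_1$ is the same for every $X$. I will make this explicit using an $N\times(N-1)$ matrix $U$ with orthonormal columns spanning $\mathbf{1}^\perp$. Then $\lambda_2(L(X))=\lambda_{\min}(U^\top L(X) U)$, and $\lambda_{\min}$ of an affine matrix function is concave.

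As a complement, this gives the SDP form mentioned after \eqref{eq:spectral_problem}. The problem is to maximize $t$ subject to $U^\top L(X) U \succeq t I_{N-1}$ together with the linear constraints. Equivalently, the matrix inequality can be written as $L(X)\succeq t\,(I-\tfrac{1}{N}\mathbf{1}\mathbf{1}^\top)$. Both are linear matrix inequalities in $(X,t)$, so the problem is an SDP and hence convex.
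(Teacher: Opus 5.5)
Your proof is correct, but it rests on a different key lemma from the paper's.

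The paper uses only the \emph{value} $\lambda_1(L)=0$. It rewrites the objective as $\lambda_1+\lambda_2$, which is minus the sum of the two largest eigenvalues of $-L$, and invokes Ky Fan's theorem with $k=2$ to get concavity. You instead use the fixed \emph{eigenvector} $\mathbf{1}$: compressing to $\mathbf{1}^\perp$ gives $\lambda_2(L(X))=\lambda_{\min}(U^\top L(X)U)$, a pointwise minimum of the linear functions $\sum x_{ij}(v_i-v_j)^2$.

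The paper's route is a one-line citation. Yours is self-contained, needing only Courant--Fischer and the fact that a minimum of linear functions is concave, and it yields two things the paper does not derive. First, it gives the explicit LMI $L(X)\succeq t\,(I-\tfrac{1}{N}\mathbf{1}\mathbf{1}^\top)$ behind the paper's unproved remark that the problem can be cast as an SDP. Second, it shows that $\bigl((v_i-v_j)^2\bigr)_{(i,j)\in\Epotential}$ is a supergradient for any unit $v\perp\mathbf{1}$ in the $\lambda_2$-eigenspace, even at repeated eigenvalues. This is the direction used in the gradient formula of the PGA section.

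Your justification that $\lambda_1\equiv 0$ on the feasible set is also the right one: positive semidefiniteness from $x_{ij}\ge 0$, together with $L(X)\mathbf{1}=0$. The paper's qualifier ``for any connected graph'' is unnecessary and slightly misleading. Feasible points such as $\mathbf{x}=0$ have disconnected support, yet the Ky Fan step needs $\lambda_1=0$ on the whole feasible set.
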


\begin{proof}
Let the eigenvalues of $L$ be $\lambda_1 \leq \lambda_2 \leq \cdots \leq \lambda_N$. Since $\lambda_1 = 0$ for any connected graph, maximizing $\lambda_2$ is equivalent to minimizing $(-\lambda_1 - \lambda_2)$, which is the sum of the two largest eigenvalues of the matrix $-L$. The sum of the $k$ largest eigenvalues of a symmetric matrix is a well-known convex function of that matrix's entries~\cite{fan1949theorem, Lewis1996, boyd2004convex}.  
Therefore, the objective is a convex function of $x_{ij}$'s.
The constraints are linear and hence form a convex set.
\end{proof}

\subsection{Efficient Sparse Formulation}
\label{subsec:sparse_formulation}

While problem~\eqref{eq:spectral_problem} is convex, a direct implementation using an $N \times N$ matrix variable involves $O(N^2)$ decision variables. For large constellations, 
this would be computationally prohibitive, rendering standard SDP solvers prohibitively memory-intensive. 
However, the graph is naturally sparse; the set of feasible links $\Epotential$ is much smaller than the set of all pairs. To exploit this sparsity, we reformulate the problem using a vector of variables.

Let $M = |\Epotential|$. We index the potential edges $k = 1, \dots, M$. We define the decision variable as a vector $\mathbf{x} \in \mathbb{R}^M$, where $x_k$ corresponds to the strength of the $k$-th potential edge connecting nodes $i$ and $j$. 
Furthermore, let $\mathcal{N}(i)$ denote the set of indices of edges incident to vertex $i$ in $\Epotential$.

To construct the Laplacian efficiently, we utilize the \textit{incidence matrix} $B \in \mathbb{R}^{N \times M}$. For each edge $k$ connecting nodes $u$ and $v$, we assign an arbitrary direction (e.g., $u \to v$) such that:
\begin{align}
B_{ik} = 
\begin{cases} 
1 & \text{if } i = u \\
-1 & \text{if } i = v \\
0 & \text{otherwise}.
\end{cases}
\end{align}
We define the diagonal weight matrix $\Wdiag \in \mathbb{R}^{M \times M}$ as $\Wdiag = \diag(\mathbf{
x
})$. We can now state the following relationship between the vector of edge weights and the Laplacian matrix.

\begin{lemma}[Laplacian Decomposition]
\label{lm:laplacian-decomposition}
Let $B$ be the unweighted incidence matrix of a graph, and let $\Wdiag$ be the diagonal matrix of edge weights. The weighted Laplacian matrix $L$ satisfies:
\begin{align}
L = B \Wdiag B^T.
\end{align}
\end{lemma}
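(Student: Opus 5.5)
The plan is a direct entrywise computation of $B\Wdiag B^T$, compared against the definition of $L(X)$ in~\eqref{eq:laplacian}. Since $\Wdiag=\diag(\mathbf{x})$ is diagonal, the $(i,j)$ entry of the product collapses to a single sum over edges,
\begin{align}
(B\Wdiag B^T)_{ij} = \sum_{k=1}^{M} B_{ik}\, x_k\, B_{jk}.
\end{align}
I would first note that $B_{ik}B_{jk}\neq 0$ only when edge $k$ is incident to both $i$ and $j$. This leads to two cases.

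\textbf{Diagonal case $i=j$.} Here $B_{ik}^2=1$ exactly when $k\in\mathcal{N}(i)$, and $0$ otherwise. So the entry equals $\sum_{k\in\mathcal{N}(i)} x_k$. This is $\sum_{\ell\neq i} x_{i\ell}$, using the convention that $x_{i\ell}=0$ for pairs $(i,\ell)\notin\Epotential$, and it matches $L_{ii}$.

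\textbf{Off-diagonal case $i\neq j$.} A nonzero term requires edge $k$ to have endpoints $\{i,j\}$. The graph is simple, since $\Epotential$ is a set of undirected pairs, so there is at most one such $k$. For that edge, one of $B_{ik},B_{jk}$ is $+1$ and the other is $-1$, whatever direction was assigned. The entry is therefore $-x_k=-x_{ij}$. If no such edge exists, the entry is $0=-x_{ij}$, again by the zero convention. Either way it matches $L_{ij}$.

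I would also remark that the identity does not depend on the arbitrary orientation, because each term involves the product $B_{ik}B_{jk}$, which is invariant under flipping the sign of column $k$. An equivalent way to phrase the whole argument is as a rank-one decomposition, $B\Wdiag B^T=\sum_k x_k\, \mathbf{b}_k\mathbf{b}_k^T$ with $\mathbf{b}_k=\mathbf{e}_u-\mathbf{e}_v$, where each term is the Laplacian of the single weighted edge $k$.

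There is no real obstacle here. The only care needed is bookkeeping: identifying $x_k$ with $x_{ij}$ for the edge $k=\{i,j\}$, setting $x_{ij}=0$ off $\Epotential$, and relying on $\Epotential$ having no parallel edges so that the off-diagonal sum has at most one term.
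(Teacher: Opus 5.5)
Your proposal is correct and follows the paper's own proof: expand $(B\Wdiag B^T)_{ij}=\sum_k x_k B_{ik}B_{jk}$, then check the off-diagonal and diagonal cases against~\eqref{eq:laplacian}. Your extra remarks make explicit some bookkeeping the paper leaves implicit: the zero convention off $\Epotential$, the absence of parallel edges, and independence from the chosen orientation.
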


\begin{proof}
Let $A = B \Wdiag B^T$, and let $A_{ij}$ be the $(i,j)$-th entry of this matrix.
Since $\Wdiag$ is diagonal,
\begin{align}
A_{ij} = \sum_{k=1}^M \Lambda_{kk}
B_{ik} B_{jk}.
\end{align}
We analyze two cases:
\begin{enumerate}
    \item
    {Off-diagonal terms ($i \neq j$):} The term $B_{ik} B_{jk}$ is non-zero only if edge $k$ connects vertices $i$ and $j$. If it does, one entry is $1$ and the other is $-1$, yielding a product of $-1$. Thus, $A_{ij} = -
    x_{ij}$ if edge $(i,j)$ exists, and 0 otherwise. This matches the definition of $L_{ij}$ for $i \neq j$.
    \item 
    {Diagonal terms ($i = j$):} Here, $A_{ii} = \sum_{k=1}^M \Lambda_{kk} 
    (B_{ik})^2$. Since $B_{ik} \in \{0, 1, -1\}$, $(B_{ik})^2 = 1$ if edge $k$ is connected to node $i$, and 0 otherwise. Thus, $A_{ii} = \sum_{k \in \mathcal{N}(i)} \Lambda_{kk}
    $, which is the weighted degree of node $i$. This matches the definition of $L_{ii}$.
\end{enumerate}
Therefore, $A = L$.
\end{proof}

Using this lemma, we express $L$ as a linear function of the vector $\mathbf{x}$:
\begin{align}
L(\mathbf{x}) = B \diag(\mathbf{x}) B^T.
\end{align}
This formulation reduces the number of optimization variables from $O(N^2)$ to $O(M)$, making the problem tractable for large constellations.

\subsection{First-Order Optimization}
\label{subsec:gradient_opt}

While the spectral problem~\eqref{eq:spectral_problem} is convex, its SDP formulation requires enforcing positive semidefiniteness of an $N \times N$ matrix at each solver iteration. Standard primal-dual interior-point methods for SDPs scale with complexity $O(N^3)$ to $O(N^4)$ per iteration~\cite{vandenberghe1996semidefinite}. 
To address this, we propose a first-order projected gradient ascent (PGA) method~\cite{bertsekas1999nonlinear}. PGA is well-suited for this problem because: i) it operates directly on the sparse vector $\mathbf{x}$, avoiding the $O(N^2)$ memory footprint of SDPs; and ii) the projection onto the box constraints $0 \leq x_{ij} \leq 1$ is computationally trivial. We reformulate the constrained optimization problem as a sequence of maximizations using a quadratic penalty method.

\subsubsection{Penalty Formulation}
To relax the hard degree constraints~\eqref{eq:continuous_budget}, we introduce a quadratic penalty function. Recall that $\mathcal{N}(i)$ denotes the set of neighbors of node $i$ in the initial graph $
(V,\Epotential)$. Denote the current total strength at node $i$ as
\begin{align} \label{eq:node_strength}
    s_i(\mathbf{x}) = \sum_{j \in \mathcal{N}(i)} x_{ij},
\end{align}
and 
the penalty as
\begin{align}
    \Phi(\mathbf{x}) = \sum_{i \in V} \left(\max\left(0, g_i(\mathbf{x})\right)\right)^2, \label{eq:penalty}
\end{align}
where
\begin{align} \label{eq:g_violation}
    g_i(\mathbf{x}) = s_i(\mathbf{x}) - D .
\end{align}
We then seek to maximize a parameterized Lagrangian-like objective. The penalized objective, for scalar penalty parameter $\rho > 0$, is:
\begin{align}
    \maximize_{\mathbf{x} \in [0,1]^M} \quad \mathcal{J}(\mathbf{x}; \rho) = f(\mathbf{x}) - \rho \, \Phi(\mathbf{x}), \label{eq:penalized_obj}
\end{align}
where
\begin{align} \label{eq:f()}
    f(\mathbf{x}) = \lambda_2(L(\mathbf{x})) .
\end{align}
As established in penalty method theory~\cite{bertsekas1999nonlinear, nocedal1999numerical}, as $\rho \to \infty$, the solution sequence $\mathbf{x}^*(\rho)$ converges to the solution of the original constrained problem. Intuitively, as $\rho$ grows, any non-zero violation of the constraints incurs an arbitrarily high cost, forcing the optimizer into the feasible region to maximize the net objective.

\subsubsection{Spectral Gradients and Eigenvalue Multiplicity}
The function defined by~\eqref{eq:f()} 
is concave but can be non-smooth when $\lambda_2$ has multiplicity greater than one. In highly symmetric topologies such as Walker constellations, eigenvalues frequently coalesce, such that $\lambda_2 = \lambda_3 = \dots = \lambda_k$. At such points, the standard gradient is undefined because the eigenvectors are not unique; they form an invariant subspace, and the function exhibits a non-differentiable ``kink''~\cite{lewis1996convex, overton1992sum}.

To proceed, we utilize the concept of the \textit{generalized gradient} (or Clarke subdifferential) for spectral functions~\cite{clarke1990optimization}. Let $\epsilon$ be a numerical tolerance. We define the cluster of active eigenvalues $\mathcal{K}$ as:
\begin{align}
\label{eq:k-cluster}
\mathcal{K} = \{ k \in \{2, \dots, N\} : |\lambda_k - \lambda_2| < \epsilon \}.
\end{align}
We first consider the derivative for a simple eigenvalue. Let $\mathbf{v}^{(k)}$ be the normalized eigenvector associated with eigenvalue $\lambda_k$. From standard matrix perturbation theory~\cite{stewart1990matrix}, the differential of a simple eigenvalue $\lambda_k$ with respect to the matrix $L$ is given by $d\lambda_k = (\mathbf{v}^{(k)})^T (dL) \mathbf{v}^{(k)}$.
Recall that $L = B \diag(\mathbf{
{x}
}) B^T$, where $\mathbf{
{x}
}$ is the vector of edge weights. The partial derivative with respect to a specific edge weight $
{x}_{ij}$ connecting nodes $i$ and $j$ is:
\begin{align}
\frac{\partial \lambda_k}{\partial 
{x}_{ij}} &= (\mathbf{v}^{(k)})^T \left( \frac{\partial L}{\partial 
{x}_{ij}} \right) \mathbf{v}^{(k)}  \\
&= (\mathbf{v}^{(k)})^T \left( (\mathbf{e}_i - \mathbf{e}_j)(\mathbf{e}_i - \mathbf{e}_j)^T \right) \mathbf{v}^{(k)} \\
&= \left( v_i^{(k)} - v_j^{(k)} \right)^2
\end{align}
where $\mathbf{e}_i$ is the standard basis vector. 

When the multiplicity is greater than 1, the subdifferential $\partial f(\mathbf{x})$ is defined as the convex hull of the gradients induced by all unit vectors in the eigenspace of $\lambda_2$~\cite{lewis1996convex}. To define a robust ascent direction, we approximate the subgradient by averaging the gradients over the active cluster $\mathcal{K}$. This technique, a standard heuristic in eigenvalue optimization~\cite{overton1992sum}, stabilizes the trajectory by accounting for the sensitivity of the entire subspace:
\begin{align}
\label{eq:spectral_grad}
\nabla_{x_e} f(\mathbf{x}) \approx \frac{1}{|\mathcal{K}|} \sum_{k \in \mathcal{K}} \left( v_i^{(k)} - v_j^{(k)} \right)^2.
\end{align}
Averaging produces a direction in the convex hull that is invariant to arbitrary rotations of the eigenbasis within the cluster, preventing the optimizer from oscillating wildly between different eigenvectors.

\subsubsection{Penalty Gradient}
We now derive the gradient of the penalty term 
defined in \eqref{eq:penalty}.
For a specific edge $e=(u,v)$, which contributes to the degree sums of both node $u$ and node $v$, we have
\begin{align}
    &\nabla_{x_{uv}} \Phi(\mathbf{x}) \notag \\
    &\quad = \frac{\partial}{\partial x_{uv}} \sum_{k \in V} \left( \max(0, g_k(\mathbf{x})) \right)^2 \\
    &\quad = 2 \max(0, g_u(\mathbf{x})) \frac{\partial g_u}{\partial x_{uv}} + 2 \max(0, g_v(\mathbf{x})) \frac{\partial g_v}{\partial x_{uv}} \\
    &\quad = 2
    \max(0, s_u(\mathbf{x}) - D) + 2 \max(0, s_v(\mathbf{x}) - D)
    \label{eq:penalty_grad}
\end{align}
where we have used the fact that 
$\frac{\partial g_k}{\partial x_{uv}}$ is equal to 1 if $k=u$ or $k=v$ and 0 otherwise.
The net gradient direction $\mathbf{g}_t$ at iteration $t$ combines the spectral pull and the penalty push:
\begin{align}
\mathbf{g}_t = \nabla f(\mathbf{x}_t) - \rho_t \nabla \Phi(\mathbf{x}_t).
\end{align}

\subsubsection{Momentum-Based Update and Annealing Schedule}
To escape local optima and traverse the flat plateaus characteristic of spectral functions, we employ a momentum-based update rule, specifically Polyak's heavy-ball method~\cite{polyak1964some}. Momentum accelerates convergence by accumulating velocity in directions of persistent descent and dampening oscillations in directions of high curvature~\cite{qian1999momentum}.

We simultaneously employ a dynamic annealing schedule. Let $T_{\max}$ be the total number of iterations. We define a variable penalty parameter $\rho_t$ that ramps quadratically to strictly enforce feasibility in the final iterations, and a learning rate $\eta_t$ that decays to ensure the variance of the gradient approximation vanishes:
\begin{align}
\label{eq:rho_schedule}
\rho_t
&= \rho_{\min} + (\rho_{\max} - \rho_{\min}) \left( \frac{t}{T_{\max}} \right)^2 \\
\label{eq:eta_schedule}
\eta_t 
& = \frac{\eta_0}{1 + \alpha t}.
\end{align}
Let $\mu \in [0, 1)$ denote the momentum coefficient, and let $\mathbf{m}_t$ denote the momentum vector, initialized to the zero vector. Let $\mathcal{P}_{[0,1]}$ denote the Euclidean projection onto the unit hypercube, defined element-wise as $\min(1, \max(0, x))$. The complete update rule is:
\begin{align}
\label{eq:update_rule_m}
\mathbf{m}_{t+1} &= \mu \mathbf{m}_t + \eta_t \mathbf{g}_t \\
\label{eq:update_rule_x}
\mathbf{x}_{t+1} &= \mathcal{P}_{[0,1]} \left( \mathbf{x}_t + \mathbf{m}_{t+1} \right) .
\end{align}

\subsection{The Two-Stage Solution Methodology}
\label{subsec:two_stage}

Our proposed solution combines the continuous spectral relaxation with a final discrete selection step.

\paragraph{Stage 1: Spectral Relaxation.}
We solve the spectral optimization problem~\eqref{eq:spectral_problem} using the PGA method described above. Let $\{x_{ij}^*\}$ be the optimal strengths. 

\paragraph{Stage 2: Discrete Rounding via {integer linear programming} (ILP)}
We use these optimal 
$x_{ij}^*$'s
as the weights in our discrete optimization. We introduce binary decision variables $y_{ij}
$, where $y_{ij}=1$ signifies that edge $(i,j)$ is selected. The problem becomes
to select a
set of edges that maximizes the total spectral 
{strength}
under
the original 
degree
budget:
\begin{subequations} \label{eq:ilp} 
\begin{align}
    & \maximize_{\{y_{ij}\}} & & \sum_{(i,j) \in \Epotential} x_{ij}^* \cdot y_{ij} \\
    & \subjectto & & \sum_{j : (i,j) \in \Epotential} y_{ij} \leq D, \quad \forall i \in V \\
    & & & y_{ij} \in \{0, 1\}, \quad \forall (i,j) \in \Epotential \\
    & & & y_{ij} = y_{ji}, \quad \forall (i,j) \in \Epotential . 
\end{align}    
\end{subequations}
This two-stage approach is more tractable. It replaces a single, highly complex discrete problem with two sequential, better-understood problems: a convex optimization (Stage 1) and a combinatorial optimization (Stage 2) for which efficient solvers and heuristics exist.

\subsection{Specialization: Fixed Intra-Plane Links}
\label{subsec:fixed_intra}

We classify links as \emph{intra-plane}, connecting satellites within the same orbital plane, and \emph{inter-plane}, connecting satellites across different planes.
While all links in $\Epotential$ are treated equally in our optimization framework and no links are pre-assigned, it also accommodates configurations where specific links are fixed by the hardware architecture.

In certain operational scenarios, intra-plane links may be mandated by hardware design or mission requirements. For example, in constellations where each satellite carries dedicated laser terminals for its two orbital-plane neighbors (as in Starlink~\cite{fcc21, wang1993structural}), these links are always active and not subject to optimization. The framework readily accommodates this as a special case.

Specifically, let $E_{\text{intra}}$ and $E_{\text{inter}}$ denote the intra-plane and inter-plane subsets of $\Epotential$, respectively. Intra-plane links are fixed with strength $x_{ij} = 1$, and the optimization is performed only over $E_{\text{inter}}$, subject to a separate per-satellite inter-plane budget $D_{\text{inter}} = D - 2$ (accounting for the two fixed intra-plane links). 
Mathematically, this is achieved by appending an additional constraint to the spectral optimization problem~\eqref{eq:spectral_problem}:
\begin{align}
    x_{ij} = 1, \quad \forall (i,j) \in E_{\text{intra}}. \label{eq:constraint_intra}
\end{align}
Stage~1 then solves the spectral relaxation over the inter-plane variables only, with the fixed intra-plane contributions included as constants in the Laplacian. Stage~2 rounds the resulting weights via the ILP with budget $D_{\text{inter}}$. This specialization reduces the number of decision variables from $|\Epotential|$ to $|E_{\text{inter}}|$ and can be employed whenever the intra-plane topology is predetermined by the constellation's hardware architecture.

\section{
An Iterative Local-Search Algorithm}
\label{sec:baseline}

We now describe an iterative local-search heuristic approach~\cite{mollakhani2026inter} that serves as a baseline for evaluating the spectral framework. The heuristic operates over the same candidate edge set $\Epotential$ defined in Sec.~\ref{sec:problem_formulation}, with identical feasibility constraints and degree budget $D$.

\subsection{Initial Topology Construction}
\label{subsec:initial_construction}

The algorithm constructs a baseline network graph $(V, E_0)$ where $E_0 \subset 
\Epotential
$ using the following procedure. We begin with the set of vertices $V$ with no links.
The topology is constructed via a greedy, degree-balanced strategy. The following four-step sequence is executed iteratively over multiple passes to ensure maximum utilization of the degree budget $D$:
\begin{enumerate}
    \item All satellites with degree less than $D$ are identified and sorted by residual capacity (available link ports), so that satellites with fewer existing links are prioritized for new connections.

    \item For a selected satellite $u$, all feasible neighbors $v \in \Epotential$ such that $\deg_E(v) < D$ are identified as candidates.

    \item To promote long-range connections that reduce network diameter, a \emph{stratified selection} strategy is employed. Candidate neighbors are sorted by Euclidean distance and split into ``far'' and ``near'' halves. Each half is independently shuffled to promote diversity. The algorithm then attempts to add links starting with the far set, followed by the near set, until $u$ reaches its degree budget or no candidates remain.

    \item A link $(u, v)$ is established only if both $u$ and $v$ have not yet reached their respective link limits and the link does not already exist.
\end{enumerate}

\subsection{Iterative Topology Optimization}
\label{subsec:iterative_optimization}

After initialization, an iterative local-search algorithm refines the ISL connections to reduce the network diameter $\operatorname{diam}(E)$ as defined in~\eqref{eq:diameter}. The optimization alternates between two phases over a fixed number of $n$ iterations.

\subsubsection{Repair phase (every \texorpdfstring{$K$}{K} iterations)}
Every $K$-th iteration, a repair phase is triggered to address connectivity gaps. The algorithm identifies all ``deficient'' satellites with degree less than $D$. For each such satellite, it attempts to add new links by connecting to valid candidates that have available capacity. To ensure fairness, the candidate list is shuffled before attempting connections. A safeguard mechanism ensures that no satellite exceeds its degree constraint as a result of this process. This phase corrects structural weaknesses and prevents sparsely connected nodes from becoming communication bottlenecks.

\subsubsection{Random replacement phase (default iterations)}
In all non-repair iterations, the algorithm performs a random replacement step to explore the solution space. A random subset of $m$ satellites is selected, and for each, one of its existing links (if any) is randomly removed. The algorithm then attempts to form a new link with a different, randomly selected valid candidate, respecting all constraints. This phase introduces topological perturbations that allow the search to escape local minima and discover more efficient configurations.

\subsubsection{Evaluation and acceptance}
After each iteration, the modified topology $E'$ is evaluated and compared against the best-found topology $E^*$ using an ordered multi-objective criterion. Let $\bar{H}$ denotes the average maximum shortest-path cost over all satellites, and $\sigma$ denotes the stable ISL fraction. The new topology $E'
$ replaces $E^*
$ if:
\begin{enumerate}[label=(\alph*)]
    \item It has a 
    smaller diameter: $\operatorname{diam}(E') < \operatorname{diam}(E^*)$; or
    \item It has the same diameter but a smaller average shortest-path cost: $\operatorname{diam}(E') = \operatorname{diam}(E^*)$ and $\bar{H}' < \bar{H}^*$; or
    \item Both primary metrics are equal, but the fraction of links that remain stable over an orbital period is higher: $\operatorname{diam}(E') = \operatorname{diam}(E^*)$, $\bar{H}' = \bar{H}^*$, and $\sigma' > \sigma^*$.
\end{enumerate}

This multi-objective acceptance criterion ensures greedy improvement on the primary objective (diameter reduction) while using secondary metrics for tie-breaking to enhance overall network performance and stability.

\begin{algorithm}[t]
\caption{Iterative Link Optimization~\cite{mollakhani2026inter}}
\label{alg:isl_optimization}
\begin{algorithmic}[1]
\State \textbf{Input:} Satellite set $V$, feasible link set $\Epotential$, degree budget $D$, total iterations $n$, reinforcement interval $K$, perturbation size $m$
\State \textbf{Output:} Optimized topology $E^*$
\vspace{0.3em}
\State $E^* \gets$ initial topology according to Section~\ref{subsec:initial_construction}
\State Evaluate $E^*$: diameter $H^*$, avg max hop $\bar{H}^*$, stability $\sigma^*$
\For{$i = 1$ to $n$}
    \State $E' \gets$ an identical copy of $E^*$
    \If{$i \bmod K = 0$} \Comment{Repair Phase}
        \State $U \gets \{v \in V : \deg_{E\am{'}}
        (v) < D\}$
        \For{each satellite $u \in U$}
            \State Add feasible links to valid candidates
            \State Enforce degree constraints at both endpoints
        \EndFor
    \Else \Comment{Random Replacement Phase}
        
        \For{each $u$ in a random subset of $m$ satellites}
            \State Replace one random link $(u, v)$ (if any) by
            \State feasible link $(u, w)$ 
        \EndFor
        
    \EndIf
    \State Evaluate $E'$: diameter $H'$, avg max hop $\bar{H}'$, stability $\sigma'$
    \If{$H' < H^*$ \textbf{or} ($H' = H^*$ \textbf{and} $\bar{H}' < \bar{H}^*$) \textbf{or} ($H' = H^*$ \textbf{and} $\bar{H}' = \bar{H}^*$ \textbf{and} $\sigma' > \sigma^*$)}
        \State $E^* \gets E'$,\; $H^* \gets H'$,\; $\bar{H}^* \gets \bar{H}'$,\; $\sigma^* \gets \sigma'$
    \EndIf
\EndFor
\State \Return $E^*$
\end{algorithmic}
\end{algorithm}

\subsection{Operational Variants}
\label{subsec:baseline_variants}

The baseline algorithm is guided by two variants with distinct link feasibility rules. In the snapshot
variant, the set of valid candidates for each satellite is larger, as links need only be feasible at a single instant. This greater topological flexibility allows link stability ($\sigma$) to be used as a secondary tie-breaking metric. In contrast, the viability-constrained variant only permits links that remain geometrically valid over a full orbital period. In this stricter model, link stability is guaranteed by construction and therefore does not serve as a separate optimization criterion.

The complete baseline procedure is summarized in Algorithm~\ref{alg:isl_optimization}. This adaptive local-search method navigates the solution space to find topologies balancing low network diameter with long-term link stability. While effective as a standalone method, the baseline also serves as a warm-start for the spectral framework: the discrete topology produced by Algorithm~\ref{alg:isl_optimization} is converted into an initial strength vector $\mathbf{x}_0$ that initializes the PGA of Sec.~\ref{subsec:gradient_opt}, ensuring that the spectral method improves upon or matches the heuristic baseline.

\section{Simulation and Results}
\label{sec:simulation}

We evaluate the performance of the proposed two-stage spectral framework against the baseline heuristic. 

\subsection{Simulation Setup}
\label{subsec:setup}

\paragraph{Constellation Parameters.}
Simulations are conducted on Walker--Delta constellations
where each orbit has an independent
phase offset 
uniformly distributed in  
$[0, \phi_{\max}]$. 
The parameters for 
a 500-satellite constellation and a 1,584-satellite constellation
are summarized in Table~\ref{tab:constellation_params}. The baseline heuristic runs for $n = 300$ iterations per trial, with a repair phase every $K = 15$ iterations and a perturbation size of $m = 20$ satellites.

\begin{table}[H]
\centering
\caption{Constellation and System Parameters}
\label{tab:constellation_params}
\renewcommand{\arraystretch}{1.2}
\begin{tabular}{lcc}
\toprule
\textbf{Parameter} & \textbf{Small-Scale} & \textbf{Large-Scale} \\
\midrule
Total satellites ($N$) & 500 & 1,584 \\
Orbital planes ($N_p$) & 25 & 72 \\
Satellites per plane ($N_s$) & 20 & 22 \\
Orbital altitude ($h$) & 550 km & 550 km \\
Orbital inclination ($\theta$) & $53^\circ$ & $53^\circ$ \\
Total degree budget ($D$) & 4 & 4 \\
Maximum ISL distance ($d_{\max}$) & 3,500 km & Varied \\
Maximum orbital phase offset 
($\phi_{\max}$) & $ 
\pi/4$ & $ 
\pi/4$ \\
\bottomrule
\end{tabular}
\end{table}

\paragraph{Solver Configuration.}
The spectral framework's PGA is configured with a maximum of $T_{\max} = 20{,}000$ iterations. The dynamic penalty schedule~\eqref{eq:rho_schedule} initializes with  $\rho_{\min} = 0.001$ to allow temporary constraint violations during the early exploration phase, and gradually increases (anneals) to $\rho_{\max} = 30.0$ to enforce feasibility more strictly in later steps. The learning rate~\eqref{eq:eta_schedule} uses $\eta_0 = 2.0$ with decay $\alpha = 0.002$, and the momentum coefficient is $\mu = 0.8$. Additionally, the tolerance $\epsilon$ used to identify the active eigenvalue cluster $\mathcal{K}$ in \eqref{eq:k-cluster} is adapted dynamically at each step $t$ as $\epsilon_t = 0.05 |\lambda_2(\mathbf{x}_t)| + 10^{-4}$.

\paragraph{Warm-Start Strategy.} 
The baseline heuristic is first executed to
obtain a feasible initial topology. This discrete solution is converted into an initial strength vector $\mathbf{x}_0$ for the PGA.

\paragraph{Statistical Methodology.}
All experiments are repeated over 50 independent random initializations. We report minimum, maximum, and mean values to characterize statistical variance.

We report three metrics for each scenario/algorithm:
\begin{itemize}
    \item
    {the maximum hop count (diameter)},

    \item 
    {the average maximum hop count} 
    over all satellites (as a source), and

    \item
    the total number of edges in the resulting subgraph.
\end{itemize}
Each topology is evaluated using all-source shortest-path analysis via breadth-first search (BFS)~\cite{kozen1992depth}.

\subsection{Small-Scale Validation (\texorpdfstring{$N = 500$}{N = 500})}
\label{subsec:small_scale}

Table~\ref{tab:snapshot_results} compares the methods in the snapshot scenario. The proposed spectral framework consistently achieves a worst-case diameter of 9 hops across all 50 runs, whereas the heuristic baseline described in Sec.~\ref{sec:baseline} averages 11.02 hops. Furthermore, the spectral method 
consistently producing a 4-regular graph with 1,000 edges, whereas the heuristic method leaves some vertices with fewer than four edges.

\begin{table}[H]
\centering
\caption{Snapshot
Results ($N = 500$
)}
\label{tab:snapshot_results}
\begin{tabular}{lcc}
\toprule
\textbf{Metric} & \textbf{Spectral Framework} & \textbf{Heuristic Baseline} \\
\midrule
\textbf{Diameter (Hops)} & & \\
\quad Min & 9.0 & 11.0 \\
\quad Max & 9.0 & 12.0 \\
\quad Mean & \textbf{9.0} & 11.02 \\
\addlinespace
\textbf{Avg.\ Max Hops} & & \\
\quad Min & 8.11 & 9.81 \\
\quad Max & 8.20 & 10.42 \\
\quad Mean & 8.16 & 10.07 \\
\addlinespace
\textbf{Total Edges} & & \\
\quad Min & 1000.0 & 990.0 \\
\quad Max & 1000.0 & 999.0 \\
\quad Mean & 1000.0 & 995.38 \\
\bottomrule
\end{tabular}
\end{table}

Table~\ref{tab:viability_results} presents results under viability constraints, where $\Eviable$ is restricted to links stable over a full orbital period. The spectral framework achieves a consistent diameter of 11, compared to 13.08 for the heuristic.

\begin{table}[H]
\centering
\caption{Viability-Constrained Results ($N = 500$)}
\label{tab:viability_results}
\begin{tabular}{lcc}
\toprule
\textbf{Metric} & \textbf{Spectral Framework} & \textbf{Heuristic Baseline} \\
\midrule
\textbf{Diameter (Hops)} & & \\
\quad Min & 11.0 & 13.0 \\
\quad Max & 11.0 & 14.0 \\
\quad Mean & \textbf{11.0} & 13.08 \\
\addlinespace
\textbf{Avg.\ Max Hops} & & \\
\quad Min & 11.0 & 11.82 \\
\quad Max & 11.0 & 12.37 \\
\quad Mean & 11.0 & 12.08 \\
\addlinespace
\textbf{Total Edges} & & \\
\quad Min & 1000.0 & 987.0 \\
\quad Max & 1000.0 & 999.0 \\
\quad Mean & 1000.0 & 993.42 \\
\bottomrule
\end{tabular}
\end{table}

\subsection{Large-Scale Scalability Analysis (\texorpdfstring{$N = 1{,}584$}{N = 1,584})}
\label{subsec:large_scale}

To assess scalability, we simulate a dense constellation with 72 orbital planes and 22 satellites per plane ($N = 1{,}584$), with each plane assigned a random phase shift. The maximum ISL range $d_{\max}$ is varied from 2,000~km to 3,000~km.

The results are summarized in Figures~\ref{fig:avg_hops_noviability}--\ref{fig:worst_case_viability}, which plot the minimum, maximum, and mean values across 50 runs. To translate hop counts into latency, consider the snapshot case with $d_{\max} = 2{,}500$~km as an example: a mean diameter of 12 hops yields end-to-end delays under 90~ms, since most hops are below the maximum range.

\begin{figure}[H]
\centering
\begin{tikzpicture}
\begin{axis}[
    width=\columnwidth,
    height=6.5cm,
    xlabel={Max ISL Distance (km)},
    ylabel={Hops},
    xmin=1950, xmax=3050,
    ymin=9.5, ymax=16.8,
    xtick distance=200,
    ytick distance=1,
    grid=major,
    grid style={dotted, gray!30},
    legend style={at={(0.98,0.98)}, anchor=north east, font=\scriptsize, row sep=1pt},
    legend cell align={left}
]

\addplot[color=customCyan, line width=1.2pt, mark=*, mark size=1.5pt] coordinates {
    (2000, 13.98) (2100, 13.40) (2200, 12.75) (2300, 12.43) (2400, 11.98)
    (2500, 11.45) (2600, 11.04) (2700, 10.82) (2800, 10.30) (2900, 10.05) (3000, 9.98)
};
\addlegendentry{Spectral (mean)}

\addplot[color=customCyan, fill=customCyan, fill opacity=0.12, draw=none, forget plot]
    coordinates {
    (2000, 13.90) (2100, 13.22) (2200, 12.64) (2300, 12.22) (2400, 11.95)
    (2500, 11.32) (2600, 11.01) (2700, 10.75) (2800, 10.23) (2900, 10.01) (3000, 9.95)
    (3000, 10.01) (2900, 10.14) (2800, 10.42) (2700, 10.88) (2600, 11.08)
    (2500, 11.70) (2400, 12.02) (2300, 12.58) (2200, 12.84) (2100, 13.57) (2000, 14.07)
} \closedcycle;

\addplot[color=customCyan, fill=customCyan, fill opacity=0.3, line width=0.6pt, draw=customCyan, mark=none]
    coordinates {(0,0)};
\addlegendentry{Spectral (min--max)}

\addplot[color=customOrange, line width=1.2pt, mark=square*, mark size=1.5pt] coordinates {
    (2000, 16.02) (2100, 15.41) (2200, 14.96) (2300, 14.51) (2400, 14.06)
    (2500, 13.69) (2600, 13.33) (2700, 12.98) (2800, 12.64) (2900, 12.31) (3000, 12.05)
};
\addlegendentry{Heuristic (mean)}

\addplot[color=customOrange, fill=customOrange, fill opacity=0.12, draw=none, forget plot]
    coordinates {
    (2000, 15.83) (2100, 15.15) (2200, 14.82) (2300, 14.31) (2400, 13.81)
    (2500, 13.40) (2600, 13.15) (2700, 12.78) (2800, 12.40) (2900, 12.14) (3000, 11.89)
    (3000, 12.22) (2900, 12.48) (2800, 12.85) (2700, 13.16) (2600, 13.59)
    (2500, 13.93) (2400, 14.26) (2300, 14.76) (2200, 15.14) (2100, 15.63) (2000, 16.31)
} \closedcycle;

\addplot[color=customOrange, fill=customOrange, fill opacity=0.3, line width=0.6pt, draw=customOrange, mark=none]
    coordinates {(0,0)};
\addlegendentry{Heuristic (min--max)}

\end{axis}
\end{tikzpicture}
\caption{Avg.\ maximum hops vs.\ ISL range (snapshot scenario, $N = 1{,}584$). Solid lines show the mean over 50 runs; shaded bands indicate the min--max range.}
\label{fig:avg_hops_noviability}
\end{figure}
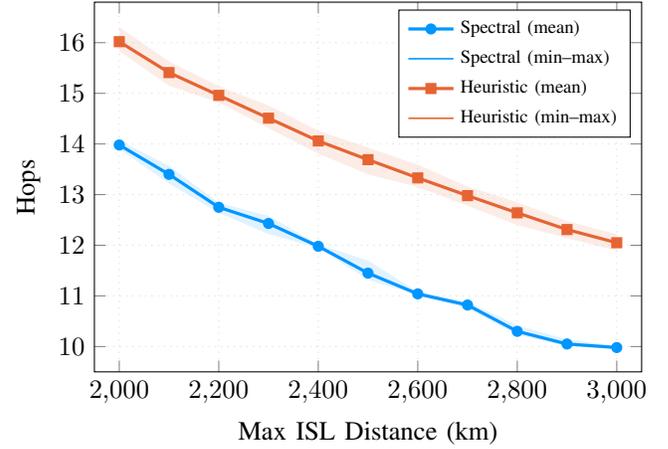

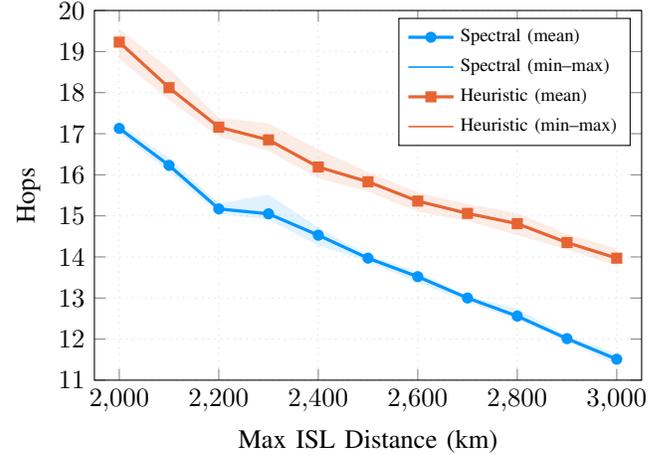
\begin{figure}[H]
\centering
\begin{tikzpicture}
\begin{axis}[
    width=\columnwidth,
    height=6.5cm,
    xlabel={Max ISL Distance (km)},
    ylabel={Hops},
    xmin=1950, xmax=3050,
    ymin=11, ymax=20,
    xtick distance=200,
    ytick distance=1,
    grid=major,
    grid style={dotted, gray!30},
    legend style={at={(0.98,0.98)}, anchor=north east, font=\scriptsize, row sep=1pt},
    legend cell align={left}
]

\addplot[color=customCyan, line width=1.2pt, mark=*, mark size=1.5pt] coordinates {
    (2000, 17.13) (2100, 16.23) (2200, 15.17) (2300, 15.05) (2400, 14.53)
    (2500, 13.97) (2600, 13.52) (2700, 13.00) (2800, 12.56) (2900, 12.01) (3000, 11.51)
};
\addlegendentry{Spectral (mean)}

\addplot[color=customCyan, fill=customCyan, fill opacity=0.12, draw=none, forget plot]
    coordinates {
    (2000, 17.01) (2100, 16.08) (2200, 15.05) (2300, 14.91) (2400, 14.28)
    (2500, 13.89) (2600, 13.37) (2700, 12.91) (2800, 12.46) (2900, 11.96) (3000, 11.36)
    (3000, 11.64) (2900, 12.08) (2800, 12.74) (2700, 13.05) (2600, 13.64)
    (2500, 14.05) (2400, 14.72) (2300, 15.52) (2200, 15.30) (2100, 16.42) (2000, 17.23)
} \closedcycle;

\addplot[color=customCyan, fill=customCyan, fill opacity=0.3, line width=0.6pt, draw=customCyan, mark=none]
    coordinates {(0,0)};
\addlegendentry{Spectral (min--max)}

\addplot[color=customOrange, line width=1.2pt, mark=square*, mark size=1.5pt] coordinates {
    (2000, 19.23) (2100, 18.12) (2200, 17.16) (2300, 16.85) (2400, 16.19)
    (2500, 15.83) (2600, 15.36) (2700, 15.06) (2800, 14.81) (2900, 14.35) (3000, 13.97)
};
\addlegendentry{Heuristic (mean)}

\addplot[color=customOrange, fill=customOrange, fill opacity=0.12, draw=none, forget plot]
    coordinates {
    (2000, 18.83) (2100, 17.82) (2200, 16.93) (2300, 16.58) (2400, 15.92)
    (2500, 15.59) (2600, 15.10) (2700, 14.88) (2800, 14.53) (2900, 14.18) (3000, 13.79)
    (3000, 14.20) (2900, 14.56) (2800, 15.07) (2700, 15.29) (2600, 15.57)
    (2500, 16.07) (2400, 16.62) (2300, 17.25) (2200, 17.39) (2100, 18.59) (2000, 19.56)
} \closedcycle;

\addplot[color=customOrange, fill=customOrange, fill opacity=0.3, line width=0.6pt, draw=customOrange, mark=none]
    coordinates {(0,0)};
\addlegendentry{Heuristic (min--max)}

\end{axis}
\end{tikzpicture}
\caption{Avg.\ maximum hops vs.\ ISL range (viability-constrained scenario, $N = 1{,}584$). Solid lines show the mean over 50 runs; shaded bands indicate the min--max range.}
\label{fig:avg_hops_viability}
\end{figure}

\begin{figure}[H]
\centering
\begin{tikzpicture}
\begin{axis}[
    width=\columnwidth,
    height=6.5cm,
    xlabel={Max ISL Distance (km)},
    ylabel={Hops},
    xmin=1950, xmax=3050,
    ymin=9.5, ymax=18.5,
    xtick distance=200,
    ytick distance=1,
    grid=major,
    grid style={dotted, gray!30},
    legend style={at={(0.98,0.98)}, anchor=north east, font=\scriptsize, row sep=1pt},
    legend cell align={left}
]

\addplot[color=customCyan, line width=1.2pt, mark=*, mark size=1.5pt] coordinates {
    (2000, 15.00) (2100, 14.08) (2200, 13.30) (2300, 13.14) (2400, 13.00)
    (2500, 12.08) (2600, 12.00) (2700, 11.32) (2800, 11.00) (2900, 11.00) (3000, 10.92)
};
\addlegendentry{Spectral (mean)}

\addplot[color=customCyan, fill=customCyan, fill opacity=0.12, draw=none, forget plot]
    coordinates {
    (2000, 15.00) (2100, 14.00) (2200, 13.00) (2300, 13.00) (2400, 13.00)
    (2500, 12.00) (2600, 12.00) (2700, 11.00) (2800, 11.00) (2900, 11.00) (3000, 10.00)
    (3000, 11.00) (2900, 11.00) (2800, 11.00) (2700, 12.00) (2600, 12.00)
    (2500, 13.00) (2400, 13.00) (2300, 14.00) (2200, 14.00) (2100, 15.00) (2000, 15.00)
} \closedcycle;

\addplot[color=customCyan, fill=customCyan, fill opacity=0.3, line width=0.6pt, draw=customCyan, mark=none]
    coordinates {(0,0)};
\addlegendentry{Spectral (min--max)}

\addplot[color=customOrange, line width=1.2pt, mark=square*, mark size=1.5pt] coordinates {
    (2000, 17.14) (2100, 16.80) (2200, 16.00) (2300, 15.94) (2400, 15.08)
    (2500, 14.98) (2600, 14.44) (2700, 14.02) (2800, 13.94) (2900, 13.30) (3000, 13.00)
};
\addlegendentry{Heuristic (mean)}

\addplot[color=customOrange, fill=customOrange, fill opacity=0.12, draw=none, forget plot]
    coordinates {
    (2000, 17.00) (2100, 16.00) (2200, 16.00) (2300, 15.00) (2400, 15.00)
    (2500, 14.00) (2600, 14.00) (2700, 14.00) (2800, 13.00) (2900, 13.00) (3000, 13.00)
    (3000, 13.00) (2900, 14.00) (2800, 14.00) (2700, 15.00) (2600, 15.00)
    (2500, 15.00) (2400, 16.00) (2300, 16.00) (2200, 16.00) (2100, 17.00) (2000, 18.00)
} \closedcycle;

\addplot[color=customOrange, fill=customOrange, fill opacity=0.3, line width=0.6pt, draw=customOrange, mark=none]
    coordinates {(0,0)};
\addlegendentry{Heuristic (min--max)}

\end{axis}
\end{tikzpicture}
\caption{Diameter vs.\ ISL range (snapshot scenario, $N = 1{,}584$). Solid lines show the mean over 50 runs; shaded bands indicate the min--max range.}
\label{fig:worst_case_noviability}
\end{figure}
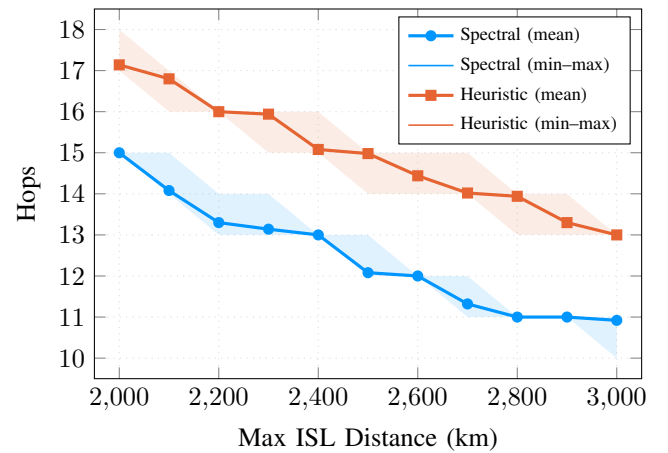

\begin{figure}[H]
\centering
\begin{tikzpicture}
\begin{axis}[
    width=\columnwidth,
    height=6.5cm,
    xlabel={Max ISL Distance (km)},
    ylabel={Hops},
    xmin=1950, xmax=3050,
    ymin=11.5, ymax=22.5,
    xtick distance=200,
    ytick distance=1,
    grid=major,
    grid style={dotted, gray!30},
    legend style={at={(0.98,0.98)}, anchor=north east, font=\scriptsize, row sep=1pt},
    legend cell align={left}
]

\addplot[color=customCyan, line width=1.2pt, mark=*, mark size=1.5pt] coordinates {
    (2000, 18.22) (2100, 17.22) (2200, 16.04) (2300, 16.02) (2400, 15.52)
    (2500, 15.00) (2600, 14.20) (2700, 14.00) (2800, 13.28) (2900, 13.00) (3000, 12.06)
};
\addlegendentry{Spectral (mean)}

\addplot[color=customCyan, fill=customCyan, fill opacity=0.12, draw=none, forget plot]
    coordinates {
    (2000, 18.00) (2100, 17.00) (2200, 16.00) (2300, 16.00) (2400, 15.00)
    (2500, 15.00) (2600, 14.00) (2700, 14.00) (2800, 13.00) (2900, 13.00) (3000, 12.00)
    (3000, 13.00) (2900, 13.00) (2800, 14.00) (2700, 14.00) (2600, 15.00)
    (2500, 15.00) (2400, 16.00) (2300, 17.00) (2200, 17.00) (2100, 18.00) (2000, 19.00)
} \closedcycle;

\addplot[color=customCyan, fill=customCyan, fill opacity=0.3, line width=0.6pt, draw=customCyan, mark=none]
    coordinates {(0,0)};
\addlegendentry{Spectral (min--max)}

\addplot[color=customOrange, line width=1.2pt, mark=square*, mark size=1.5pt] coordinates {
    (2000, 20.82) (2100, 19.34) (2200, 18.42) (2300, 18.06) (2400, 17.30)
    (2500, 17.02) (2600, 16.58) (2700, 16.02) (2800, 16.04) (2900, 15.56) (3000, 15.00)
};
\addlegendentry{Heuristic (mean)}

\addplot[color=customOrange, fill=customOrange, fill opacity=0.12, draw=none, forget plot]
    coordinates {
    (2000, 20.00) (2100, 19.00) (2200, 18.00) (2300, 18.00) (2400, 17.00)
    (2500, 17.00) (2600, 16.00) (2700, 16.00) (2800, 16.00) (2900, 15.00) (3000, 15.00)
    (3000, 15.00) (2900, 16.00) (2800, 17.00) (2700, 17.00) (2600, 17.00)
    (2500, 18.00) (2400, 18.00) (2300, 19.00) (2200, 19.00) (2100, 20.00) (2000, 22.00)
} \closedcycle;

\addplot[color=customOrange, fill=customOrange, fill opacity=0.3, line width=0.6pt, draw=customOrange, mark=none]
    coordinates {(0,0)};
\addlegendentry{Heuristic (min--max)}

\end{axis}
\end{tikzpicture}
\caption{Diameter vs.\ ISL range (viability-constrained scenario, $N = 1{,}584$). Solid lines show the mean over 50 runs; shaded bands indicate the min--max range.}
\label{fig:worst_case_viability}
\end{figure}
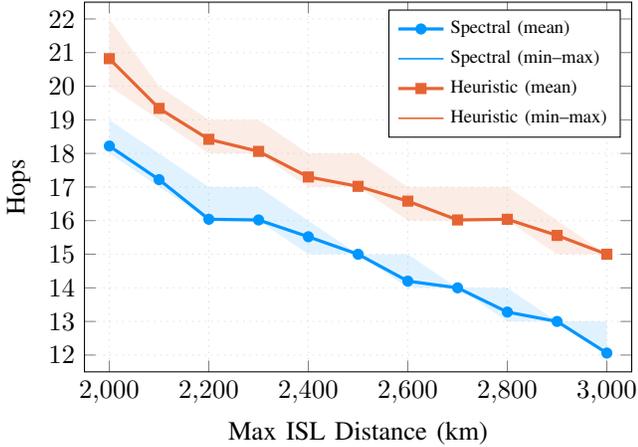

\subsection{Discussion}
\label{subsec:discussion}

The simulation results yield several key observations.

\paragraph{Spectral advantage}
Across all configurations, the spectral framework consistently achieves lower network diameters than the heuristic baseline. At $N = 500$ in the snapshot scenario, the spectral method achieves a diameter of 9, while the heuristic averages 11.02---a reduction of approximately 18\%. At $N = 1{,}584$ with $d_{\max} = 2{,}500$~km, the spectral method achieves a mean diameter of 12.08 compared to 14.98 for the heuristic in the snapshot case, and 15.00 vs.\ 17.02 in the viability-constrained case. This improvement is attributable to the spectral method's ability to optimize for global graph properties (expansion, bottleneck elimination) rather than relying on local perturbations.

\paragraph{Stability--latency trade-off}
The results clearly illustrate the fundamental tension between latency performance and operational stability. The snapshot model yields lower-diameter networks by exploiting transient, short-term links. However, this comes at the cost of high link churn, which would necessitate frequent topology updates and complex dynamic routing protocols, potentially negating the latency benefits. In contrast, the viability-constrained model produces completely stable topologies with higher diameters. Such static or semi-static topologies are well-suited for centrally managed mega-constellations where routing tables can be pre-computed and link reconfiguration is minimized. 

\paragraph{Effect of ISL range}
Increasing $d_{\max}$ improves both diameter and $\lambda_2$ under both scenarios. Each 100~km increase in $d_{\max}$ yields roughly a 0.3--0.5 hop reduction in mean diameter at the large scale. This suggests that improvements in laser terminal technology (enabling longer-range links) would directly translate to latency improvements. The gains are more pronounced in the viability-constrained case, where longer-range links are more likely to remain stable across orbital periods.

\paragraph{Degree budget utilization}
The spectral framework consistently achieves full degree budget utilization (all satellites maintain exactly $D$ links), while the heuristic occasionally leaves satellites under-connected. This is a structural advantage of the continuous relaxation, which distributes strength smoothly across edges before the rounding step ensures a regular graph.

\section{Conclusion}
\label{sec:conclusion}

We have 
presented a computationally efficient
framework for designing sparse, low-diameter ISL topologies in large-scale LEO constellations.
Simulations on Starlink-like Walker--Delta constellations show that the spectral framework consistently reduces network diameter by 2--3 hops relative to the heuristic approach.
Future directions include incorporating distance-weighted path costs to directly target propagation delay, adaptive topologies responding to dynamic traffic, and multi-shell architectures with relay satellites for further diameter reduction.

\section*{Acknowledgments}

The authors thank Dr.\ Aravindan Vijayaraghavan for suggesting the spectral graph-theoretic approach.

\bibliographystyle{IEEEtran}
\bibliography{refs-journal}

@book{GareyJohnson1979,
  author    = {Garey, Michael R. and Johnson, David S.},
  title     = {Computers and Intractability: A Guide to the Theory of NP-Completeness},
  publisher = {W. H. Freeman},
  year      = {1979},
  address   = {San Francisco, CA, USA},
}

@article{Lewis1996,
  author  = {Lewis, Adrian S.},
  title   = {The convex analysis of unitarily invariant matrix functions},
  journal = {Journal of Convex Analysis},
  year    = {1996},
  volume  = {2},
  number  = {1/2},
  pages   = {173--183},
}

@article{fan1949theorem,
  title={On a theorem of Weyl concerning eigenvalues of linear transformations I},
  author={Fan, Ky},
  journal={Proceedings of the National Academy of Sciences},
  volume={35},
  number={11},
  pages={652--655},
  year={1949},
  publisher={National Acad Sciences}
}

@book{boyd2004convex,
  title={Convex Optimization},
  author={Boyd, Stephen and Vandenberghe, Lieven},
  year={2004},
  publisher={Cambridge University Press}
}

@article{vandenberghe1996semidefinite,
  title={Semidefinite programming},
  author={Vandenberghe, Lieven and Boyd, Stephen},
  journal={SIAM Review},
  volume={38},
  number={1},
  pages={49--95},
  year={1996},
  publisher={SIAM}
}

@book{bertsekas1999nonlinear,
  title={Nonlinear Programming},
  author={Bertsekas, Dimitri P},
  year={1999},
  publisher={Athena Scientific},
  address={Belmont, MA}
}

@book{nocedal1999numerical,
  title={Numerical Optimization},
  author={Nocedal, Jorge and Wright, Stephen J},
  year={1999},
  publisher={Springer}
}

@article{lewis1996convex,
  title={Convex analysis on the Hermitian matrices},
  author={Lewis, Adrian S},
  journal={SIAM Journal on Optimization},
  volume={6},
  number={1},
  pages={164--177},
  year={1996},
  publisher={SIAM}
}

@article{overton1992sum,
  title={On the sum of the largest eigenvalues of a symmetric matrix},
  author={Overton, Michael L and Womersley, Robert S},
  journal={SIAM Journal on Matrix Analysis and Applications},
  volume={13},
  number={1},
  pages={41--45},
  year={1992},
  publisher={SIAM}
}

@book{clarke1990optimization,
  title={Optimization and Nonsmooth Analysis},
  author={Clarke, Frank H},
  year={1990},
  publisher={SIAM}
}

@book{stewart1990matrix,
  title={Matrix Perturbation Theory},
  author={Stewart, Gilbert W and Sun, Ji-guang},
  year={1990},
  publisher={Academic Press}
}

@article{polyak1964some,
  title={Some methods of speeding up the convergence of iteration methods},
  author={Polyak, Boris T},
  journal={USSR Computational Mathematics and Mathematical Physics},
  volume={4},
  number={5},
  pages={1--17},
  year={1964},
  publisher={Elsevier}
}

@article{qian1999momentum,
  title={On the momentum term in gradient descent learning algorithms},
  author={Qian, Ning},
  journal={Neural Networks},
  volume={12},
  number={1},
  pages={145--151},
  year={1999},
  publisher={Elsevier}
}

@inproceedings{ghosh2006growing,
  title={Growing well-connected graphs},
  author={Ghosh, Arpita and Boyd, Stephen},
  booktitle={Proceedings of the 45th IEEE Conference on Decision and Control},
  pages={6605--6611},
  year={2006},
  organization={IEEE}
}

@book{chung1997spectral,
  title={Spectral Graph Theory},
  author={Chung, Fan R. K.},
  volume={92},
  year={1997},
  publisher={American Mathematical Society}
}

@article{chung1989diameters,
  title={Diameters and eigenvalues},
  author={Chung, Fan R. K.},
  journal={Journal of the American Mathematical Society},
  volume={2},
  number={2},
  pages={187--196},
  year={1989}
}

@misc{fcc21,
  author       = {{Federal Communications Commission}},
  title        = {{Request for Modification of the Authorization for the SpaceX NGSO Satellite System}},
  howpublished = {Report and Order, FCC 21-48A1},
  year         = {2021},
  url          = {https://docs.fcc.gov/public/attachments/FCC-21-48A1.pdf}
}

@inproceedings{bozkurt2017why,
  title={Why is the internet so slow?!},
  author={Bozkurt, Ilker Nadi and Aguirre, Anthony and Chandrasekaran, Balakrishnan and Godfrey, P Brighten and Laughlin, Gregory and Maggs, Bruce and Singla, Ankit},
  booktitle={International Conference on Passive and Active Network Measurement},
  pages={173--187},
  year={2017},
  organization={Springer}
}

@article{chu2018time,
  title={Time division inter-satellite link topology generation problem: Modeling and solution},
  author={Chu, Xiaogeng and Chen, Yuning},
  journal={International Journal of Satellite Communications and Networking},
  volume={36},
  number={2},
  pages={194--206},
  year={2018},
  publisher={Wiley Online Library}
}

@article{cheng2007core,
  title={Core-based shared tree multicast routing algorithms for {LEO} satellite {IP} networks},
  author={Cheng, Lianzhen and Zhang, Jun and Liu, Kai},
  journal={Chinese Journal of Aeronautics},
  volume={20},
  number={4},
  pages={353--361},
  year={2007},
  publisher={Elsevier}
}

@article{wang2022stochastic,
  title={Stochastic geometry-based low latency routing in massive {LEO} satellite networks},
  author={Wang, Ruibo and Kishk, Mustafa A and Alouini, Mohamed-Slim},
  journal={IEEE Transactions on Aerospace and Electronic Systems},
  volume={58},
  number={5},
  pages={3881--3894},
  year={2022},
  publisher={IEEE}
}

@article{zheng2010routing,
  title={A routing strategy with link disruption tolerance for multilayered satellite networks},
  author={Zheng, Gang and Guo, Yanxin},
  journal={International Journal of Communications, Network and System Sciences},
  volume={3},
  number={11},
  pages={835--842},
  year={2010},
  publisher={Scientific Research Publishing}
}

@inproceedings{dai2021distributed,
  title={A distributed congestion control routing protocol based on traffic classification in {LEO} satellite networks},
  author={Dai, Shiyue and Rui, LanLan and Chen, Shiyou and Qiu, Xuesong},
  booktitle={2021 IFIP/IEEE International Symposium on Integrated Network Management (IM)},
  pages={523--529},
  year={2021},
  organization={IEEE}
}

@inproceedings{bhattacharjee2023laser,
  title={Laser inter-satellite link setup delay: Quantification, impact, and tolerable value},
  author={Bhattacharjee, Dhiraj and Chaudhry, Aizaz U and Yanikomeroglu, Halim and Hu, Peng and Lamontagne, Guillaume},
  booktitle={2023 IEEE Wireless Communications and Networking Conference (WCNC)},
  pages={1--6},
  year={2023},
  organization={IEEE}
}

@incollection{kozen1992depth,
  title={Depth-first and breadth-first search},
  author={Kozen, Dexter C},
  booktitle={The Design and Analysis of Algorithms},
  pages={19--24},
  year={1992},
  publisher={Springer}
}

@article{kermani1979virtual,
  title={Virtual cut-through: A new computer communication switching technique},
  author={Kermani, Parviz and Kleinrock, Leonard},
  journal={Computer Networks (1976)},
  volume={3},
  number={4},
  pages={267--286},
  year={1979},
  publisher={Elsevier}
}

@techreport{rfc3031,
  author      = {Rosen, E. and Viswanathan, A. and Callon, R.},
  title       = {Multiprotocol Label Switching Architecture},
  institution = {Internet Engineering Task Force (IETF)},
  year        = {2001},
  month       = {Jan},
  type        = {RFC},
  number      = {3031},
  doi         = {10.17487/RFC3031}
}

@inproceedings{handley2018delay,
  title={Delay is not an option: Low latency routing in space},
  author={Handley, Mark},
  booktitle={Proceedings of the 17th ACM Workshop on Hot Topics in Networks},
  pages={85--91},
  year={2018}
}

@inproceedings{heine2023end,
  title={End-to-end latency evaluation of {LEO} satellite {IoT} systems},
  author={Heine, Simon and Hofmann, Christian and Knopp, Andreas},
  booktitle={IET Conference Proceedings CP873},
  volume={2023},
  number={48},
  pages={216--221},
  year={2023},
  organization={IET}
}

@inproceedings{wang1993structural,
  title={Structural properties of a low {Earth} orbit satellite constellation---the {Walker Delta} network},
  author={Wang, C-J},
  booktitle={Proceedings of MILCOM'93---IEEE Military Communications Conference},
  volume={3},
  pages={968--972},
  year={1993},
  organization={IEEE}
}

@article{xiaogang2016survey,
  title={A survey of routing techniques for satellite networks},
  author={Qi, Xiaogang and Ma, Jiulong and Wu, Dan and Liu, Lifang and Hu, Shaolin},
  journal={Journal of Communications and Information Networks},
  volume={1},
  number={4},
  pages={66--85},
  year={2016},
  publisher={PTP}
}

@article{sun2022inter,
  title={Inter-satellite time synchronization and ranging link assignment for autonomous navigation satellite constellations},
  author={Sun, Leyuan and Yang, Jun and Huang, Wende and Xu, Laping and Cao, Shaochuan and Shao, Haidong},
  journal={Advances in Space Research},
  volume={69},
  number={6},
  pages={2421--2432},
  year={2022},
  publisher={Elsevier}
}

@article{liu2022data,
  title={A data-driven parallel adaptive large neighborhood search algorithm for a large-scale inter-satellite link scheduling problem},
  author={Liu, Jinming and Xing, Lining and Wang, Ling and Du, Yonghao and Yan, Jungang and Chen, Yingguo},
  journal={Swarm and Evolutionary Computation},
  volume={74},
  pages={101124},
  year={2022},
  publisher={Elsevier}
}

@inproceedings{corici2024sdn,
  title={An {SDN}-based solution for mega-constellation routing},
  author={Corici, Marius and Buhr, Hauke and Zope, Hemant and Zaboub, Manar},
  booktitle={2024 IEEE 35th International Symposium on Personal, Indoor and Mobile Radio Communications (PIMRC)},
  pages={1--6},
  year={2024},
  organization={IEEE}
}

@article{chen2021analysis,
  title={Analysis of inter-satellite link paths for {LEO} mega-constellation networks},
  author={Chen, Quan and Giambene, Giovanni and Yang, Lei and Fan, Chengguang and Chen, Xiaoqian},
  journal={IEEE Transactions on Vehicular Technology},
  volume={70},
  number={3},
  pages={2743--2755},
  year={2021},
  publisher={IEEE}
}

@article{chen2024shortest,
  title={Shortest path in {LEO} satellite constellation networks: An explicit analytic approach},
  author={Chen, Quan and Yang, Lei and Zhao, Yong and Wang, Yi and Zhou, Haibo and Chen, Xiaoqian},
  journal={IEEE Journal on Selected Areas in Communications},
  volume={42},
  number={5},
  pages={1175--1187},
  year={2024},
  publisher={IEEE}
}

@article{berry2024spectrum,
  title={Spectrum Rights in Outer Space: Interference Management for Low Earth Orbit ({LEO}) Broadband Constellations},
  author={Berry, Randall and Bustamante, Pedro and Guo, Dongning and Hazlett, Thomas and Honig, Michael and Murtazashvili, Ilia and Palo, Scott and Weiss, Martin\_B H},
  journal={Journal of Information Policy},
  volume={14},
  year={2024},
  publisher={Pennsylvania State University Press}
}

@inproceedings{mollakhani2025fault,
  title={Fault-Tolerant Spectrum Usage Consensus for Low-Earth-Orbit Satellite Constellations},
  author={Mollakhani, Arman and Guo, Dongning},
  booktitle={2025 IEEE International Conference on Decentralized Applications and Infrastructures (DAPPS)},
  pages={21--26},
  year={2025},
  organization={IEEE}
}

@inproceedings{mollakhani2026inter,
  author    = {Mollakhani, Arman and Tiamraj, Jerayu and Cao, Shu-Jie and Guo, Dongning},
  title     = {Inter-{S}atellite {L}ink {C}onfiguration for {F}ast {D}elivery in {L}ow-{E}arth-{O}rbit {C}onstellations},
  booktitle = {Proceedings of the 2026 IEEE Aerospace Conference (AeroConf)},
  year      = {2026},
  month     = {Mar.},
  address   = {Big Sky, MT, USA},
  organization = {IEEE}
}

@article{hu2025station,
  title={Station Maintenance for Low-Orbit Large-Scale Constellations Based on Absolute and Relative Control Strategies},
  author={Hu, Min and Li, Feifei and Xue, Wen and Liu, Chenhu and Guo, Wen and Ruan, Yongjing},
  journal={Applied Sciences},
  volume={15},
  number={9},
  pages={4640},
  year={2025},
  publisher={MDPI}
}

\end{document}